\newcommand{\CC}{\mathbb{C}}
\newcommand{\KK}{\mathbb{K}}
\newcommand{\ZZ}{\mathbb{Z}}
\newcommand{\QQ}{\mathbb{Q}}
\newcommand{\RR}{\mathbb{R}}
\begin{document}
\title{A Linear Algebra Approach for Detecting Binomiality of Steady
State Ideals of Reversible Chemical Reaction Networks}

\author{ Hamid Rahkooy\inst{1} \and
Ovidiu Radulescu\inst{2} \and
  Thomas Sturm\inst{1,3}}
\authorrunning{H. Rahkooy, O. Radulescu \& T. Sturm}
\institute{CNRS, Inria, and the University of Lorraine, Nancy, France \\
\email{hamid.rahkooy@inria.fr} \and
LPHI CNRS UMR 5235, and University of Montpellier, Montpellier, France \\
\email{ovidiu.radulescu@umontpellier.fr} \and
MPI-INF and Saarland University, Saarbrücken, Germany\\
\email{thomas.sturm@loria.fr}}
\maketitle

\begin{abstract}
  Motivated by problems from Chemical Reaction Network Theory, we
  investigate whether steady state ideals of reversible reaction
  networks are generated by binomials.  We take an algebraic approach
  considering, besides concentrations of species, also rate constants
  as indeterminates.  This leads us to the concept of unconditional
  binomiality, meaning binomiality for all values of the rate
  constants. This concept is different from conditional binomiality
  that applies when rate constant values or relations among rate
  constants are given.  We start by representing the generators of a
  steady state ideal as sums of binomials, which yields a
  corresponding coefficient matrix. On these grounds we propose an
  efficient algorithm for detecting unconditional binomiality. That
  algorithm uses exclusively elementary column and row operations on
  the coefficient matrix. We prove asymptotic worst case upper bounds
  on the time complexity of our algorithm.  Furthermore, we
  experimentally compare its performance with other existing methods.
  \keywords{Binomial Ideals, Linear Algebra, Reversible Chemical
    Reaction Networks}
\end{abstract}

\section{Introduction}\label{sec:intro}

A \textit{chemical reaction} is a transformation between two sets of
chemical objects called chemical \textit{complexes}. The objects that
form a chemical complex are chemical \textit{species}. In other words,
complexes are formal sums of chemical species representing the left
hand and the right hand sides of chemical reactions.  A
\textit{chemical reaction network} is a set of chemical reactions. For
example
\begin{center}
  \ch[label-style=\scriptsize]{$CO_2+H_2$ <=>[$k_{12}$][$k_{21}$]
    $CO+H_2O$}, 
\end{center}
\begin{center}
    \ch[label-style=\scriptsize]{$2CO$ <=>[$k_{34}$][$k_{43}$] $CO_2+C$}
\end{center}
is a chemical reaction network with two \textit{reversible} reactions.

A \textit{kinetics} of a chemical reaction network is an assignment of
a rate function, depending on the concentrations of chemical species
at the left hand side, to each reaction in the network.  A kinetics
for a chemical reaction network is called \textit{mass-action} if for
each reaction in the chemical reaction network, the rate function is a
monomial in the concentrations of the chemical species with exponents
given by the numbers of molecules of the species consumed in the
reaction, multiplied by a constant called rate constant.  Reactions
are classified as zero-order, first-order, etc. according to the order
of the monomial giving the rate.  For reversible reactions, the net
reaction rate is a binomial, the difference between the forward and
backward rates.  In the example above $k_{12}$, $k_{21}$, $k_{23}$,
$k_{32}$ are the rate constants. In this article we generally assume
mass-action kinetics. We furthermore assume that reactions are
reversible, unless explicitly specified otherwise.

The change in the concentration of each species over time in a
reaction can be described via a system of autonomous ordinary
differential equations. For instance, consider the chemical reaction
network above and let $x_1$, $x_2$, $x_3$, $x_4$, $x_5$ be the
indeterminates representing the concentrations of the species $CO_2$,
$H_2$, $CO$, $H_2O$ and $C$, respectively.  The corresponding
differential equations are
\begin{align}
  \dot{x}_1  &=p_1, & p_1 & =-k_{12}x_1x_2+ k_{21}x_3x_4 -
                            k_{34}x_3^2-k_{43}x_1x_5, \label{ode21} \\
  \dot{x}_2  &=p_2, & p_2& = -k_{12}x_1x_2+ k_{21}x_3x_4, \label{ode22} \\    
  \dot{x}_3 &=p_3, & p_3&=  k_{12}x_1x_2- k_{21}x_3x_4 +
                          -2k_{34}x_3^2+2k_{43}x_1x_5, \label{ode23} \\
  \dot{x}_4  &=p_4, & p_4& = k_{12}x_1x_2- k_{21}x_3x_4 , \label{ode24} \\    
  \dot{x}_5  &=p_5, & p_5& = -k_{12}x_1x_2+ k_{21}x_3x_4 +
                           k_{34}x_3^2-k_{43}x_1x_5. \label{ode25} 
\end{align}
Each zero of the polynomials $p_1$, $p_2$, $p_3$, $p_4$, $p_5$ gives a
concentration of species in which the system is in equilibrium. The
zeros of $p_1$, $p_2$, $p_3$, $p_4$, $p_5$ are called the steady
states of the chemical reaction network. Accordingly, the ideal
$\langle p_1,p_2.p_3,p_4,p_5 \rangle \subseteq
\QQ[k_{12},k_{21},k_{34},k_{43},x_1,x_2,x_3,x_4,x_5]$ is called the
{\em steady state ideal} of the chemical reaction network. We consider
the coefficient field $\QQ$ because of computability issue. Otherwise,
theoretically, our results hold for any coefficient field. The
solutions of these polynomials can be in $\RR$ or in $\CC$.

For a thorough introduction to chemical reaction network theory, we
refer to Feinberg's Book \cite{Feinberg-Book} and his lecture notes
\cite{Feinberg-Lectures}. We follow the notation of Feinberg's book in
this article.
  
An ideal is called binomial if it is generated by a set of
binomials. In this article we investigate whether the steady state
ideal of a given chemical reaction network is binomial. We are
interested in efficient algorithms for testing binomiality. Consider
the steady state ideal
\begin{equation}
I = \langle p_1,p_2,p_3,p_4,p_5\rangle \subseteq
\QQ[k_{12},k_{21},k_{34},k_{43},x_1,x_2,x_3,x_4,x_5],
\end{equation}
given by Equations (\ref{ode21})--(\ref{ode25}).  Reducing $p_1$,
$p_3$ and $p_4$ with respect to $p_2$ and $p_5$, we have
\begin{equation}\label{eq:steadystateideal}
  I = \langle -k_{12}x_1x_2+ k_{21}x_3x_4, -k_{34}x_3^2+k_{43}x_1x_5 \rangle, 
\end{equation}
which shows that the ideal $I$ is binomial. In this article, we work
over the ring $\QQ[k_{ij},x_1,\dots,x_n]$ and investigate binomiality
over this ring.

Note that in the literature there exist also slightly different
notions of binomiality. Eisenbud and Sturmfels in
\cite{eisenbud1996binomial} call an ideal binomial if it is generated
by polynomials with at most two terms. Following this definition, some
authors, e.g., Dickenstein et al.~in \cite{perez_millan_chemical_2012}
have considered the steady state ideal as an ideal in the ring
$\QQ(k_{ij})[x_1,\dots,x_n]$ and studied the binomiality of these
ideals in $\RR[x_1,\dots,x_n]$ after specialising $k_{ij}$ with
positive real values.  In order to distinguish between the two
notions, we call {\em unconditionally binomial} a steady state ideal
that is binomial in $\QQ[k,x]$ (the notion used in this paper) and
{\em conditionally binomial} a steady state ideal that is binomial in
$\QQ(k)[x]$, i.e.  for specified parameters $k$ (the notion used in
\cite{perez_millan_chemical_2012}).

The notions of binomial ideals and toric varieties have roots in
thermodynamics, dating back to Boltzmann. Binomiality corresponds to
detailed balance, which for reaction networks means that at
thermodynamic equilibrium the forward and backward rates should be
equal for all reactions. Detailed balance is a very important concept
in thermodynamics, for instance it has been used by Einstein in his
Nobel prize winning theory of the photoelectric effect
\cite{einstein1916strahlungs}, by Wegscheider in his thermodynamic
theory of chemical reaction networks \cite{Wegscheider1901} and by
Onsager for deriving his famous reciprocity relations
\cite{onsager1931reciprocal}.  Because detailed balance implies time
reversal symmetry, systems with detailed balance can not produce
directed movement and can only dissipate heat. This is important in
applications, for instance in molecular biology, where molecular
motors can not function with detailed balance. Although most
interesting molecular devices function without detailed balance and
binomiality, some of their subsystems can satisfy these
conditions. The interest of studying binomiality relies in the
simplicity of the analysis of such subsystems. For instance, important
properties such as multistationarity and stability are easier to
establish for binomial systems.  Toricity, also known as complex, or
cyclic, or semi-detailed balance is also known since Boltzmann that
has used it as a sufficient condition for deriving his famous
H-theorem \cite{boltzmann1964lectures}. Binomiality implies toricity,
but the converse is not true: in order to have binomiality, a toric
system must obey constraints on the rates constants, such as the well
known Weigscheider-Kolmogorov condition asking for the equality of the
products of forward and backward rates constants in cycles of
reversible reactions. In this paper we focus on the situation when
detailed balance is satisfied without conditions on the rate
constants.

Detecting binomiality of an ideal, particularly of a steady state
ideal, is a difficult problem, both from a theoretical and a practical
point of view. The problem is typically solved by computing a
Gr\"obner basis, which is EXPSPACE-complete
\cite{mayr_complexity_1982}. Recent linear algebra approaches for
solving the problem in a different setting than our problem construct
large matrices which also points at the difficulty of the problem
\cite{millan2012chemical,conradi2015detecting}.

There is quite comprehensive literature on chemical reaction network
theory. An excellent reference to this topic is
\cite{Feinberg-Book,Feinberg-Lectures}. As mathematical concepts,
binomiality and toricity have been widely studied and their properties
have been investigated by various authors, e.g., Fulton
\cite{fulton_introduction_2016}, Sturmfels
\cite{sturmfels_grobner_1996}, Eisenbud et
al.~\cite{eisenbud1996binomial}. Binomiality and toricity show up
quite often in chemical reaction networks. Binomiality in the case of
\textit{detailed balancing} of reversible chemical reactions has been
studied by Gorban et
al.~\cite{gorban_generalized_2015,gorban_three_2015} and Grigoriev and
Weber \cite{GrigorievWeber2012a}.  Feinberg \cite{Feinberg1972} and
Horn and Jackson \cite{Horn1972} have studied toric dynamical systems.
Gatermann et al.~studied \textit{deformed toricity} in
\cite{Gatermann2000}. Craciun, et al.~have considered the toricity
problem over the real numbers in \cite{craciun_toric_2009} and have
presented several interesting results in this regard, among them, they
have shown that \textit{complex balanced systems} are the same as
toric dynamical systems, although \textit{toric steady states} are
different from that.  It has been shown in
\cite{dickenstein2019,sadeghimanesh2019} that the binomial structure
will imply much simpler criteria for multistationarity. These results
give strong motivation for one to study algorithms for detecting
binomial networks. Especially, in \cite{dickenstein2019}, the authors
defined \textit{linearly binomial network} and they proposed
sufficient conditions for a network to be linearly binomial. The proof
is constructive even though it has not been presented as an
algorithm. Their method is also quite straightforward and can handle
more general networks in many applications.

Dickenstein et al.~have presented sufficient linear algebra conditions
with inequalities for binomiality of the steady state ideals in
\cite{millan2012chemical}. Their idea has been developed in
\cite{millan_structure_2018}, where the concept of MESSI reactions has
been introduced. Conradi and Kahle have proved in
\cite{conradi2015detecting} that for homogenous ideals (i.e. for
chemical reaction networks without zero-order reactions), the
sufficient condition of Dickenstein et al.~is necessary as well and
also introduced an algorithm for testing binomiality of homogenous
ideals. As many biochemical networks are not homogeneous, the
algorithm requires heuristics in such cases. The algorithm has been
implemented in Maple and Macaulay II in
\cite{MapleCK,alex2019analysis} and experiments have been carried out
on several biological models. Grigoriev et al.~in
\cite{grigoriev2019efficiently} have considered the toricity of steady
state ideals from a geometric point of view. Introducing shifted
toricity, they presented algorithms, complexity bounds as well as
experimental results for testing toricity using two important tools
from symbolic computation, quantifier elimination
\cite{Davenport:1988:RQE:53372.53374,Grigoriev:88a,Weispfenning:88a}
and Gr\"obner bases
\cite{Buchberger:65a,bb-system,Faugere:99a,Faugere:02a}.  Recently,
first order logic test for toricity have been introduced
\cite{sturm2020firstorder}.

The main idea of this article is to consider the generators of the
steady state ideal as sums of the binomials associated to the
reactions rather than the monomials associated to the complexes. This
is feasible for a reversible chemical reaction network. Following the
above observation and assigning a binomial to each reaction, one can
write the generators of the steady state ideal as sums of those
binomials with integer coefficients.

As our main result, we have proved that a reversible chemical reaction
network is unconditionally binomial if and only if it is “linearly”
binomial (i.e., there exist linear combinations of the generators such
that these combinations are binomials).  More precisely, having
represented of the generators of the steady state ideal as sum of
binomials, one can test the binomiality exclusively using elementary
row and column operations on the coefficient matrix of these
binomials. This can be done by computing the reduced row echelon form
of the coefficient matrix, which yields an efficient method for
testing binomiality.

Our main contributions in this article are the following.
\begin{enumerate}
\item We introduce a new representation of the generators of the steady state
  ideal of a reversible chemical reaction as a sum of certain binomials rather
  than monomials.
\item Using that representation, we assign a matrix with entries in $\ZZ$ to a
  reversible chemical reaction network, such that the binomiality of the steady
  state ideal can be tested by computing the reduced row echelon form of this matrix.
\item We prove a worst-case upper bound on the time complexity of our
  binomiality test. We experimentally compare our test with the existing
  binomiality tests in the literature, which demonstrates the applicability of
  our method.
\end{enumerate}

Our representation of the steady state ideal as a sum of certain
binomials, as well as the matrices associated to them are further
original ideas presented in this paper. While typically
complex-species matrices are used for testing binomiality, we use
reaction-species matrices for this purpose.

The plan of the article is as follows. Section \ref{sec:intro} gives
an introduction to the necessary concepts of chemical reaction network
theory, reviews the literature and presents the idea of this
work. Section \ref{sec:linearalgebra-algo} includes the main
definitions and results. In this section we show our representation of
the generators of the steady state ideal of a reversible chemical
reaction network and present our algorithm for testing binomiality. In
Section \ref{sec:complexity}, we discuss the complexity of our
method. We furthermore compare our algorithm with other existing
algorithms in the literature via experiments. In Section
\ref{sec:conclusion} we summarise our results and draw some
conclusions.

\section{Testing Binomiality}\label{sec:linearalgebra-algo}
In this section, we present our main result based on which we present
an algorithm for testing unconditional binomiality of reversible
chemical reaction networks. In Subsection \ref{subsec:sum-binom-rep}
we introduce a representation for the generators of the steady state
ideal of a chemical reaction network as sum of binomials. We show that
this representation is unique for reversible reaction networks,
considering rate constants as indeterminates. In Subsection
\ref{subsec:algorithm}, we define a matrix associated to a chemical
reaction network which is essentially the species--reaction matrix,
rather than the stoichiometric matrix which is the species--complex
matrix. Having considered constant rates as indeterminates, the
uniqueness of our matrix for reversible reactions comes from the
uniqueness of representing of the generators of the steady state ideal
as sum of binomials.

\subsection{Sum of Binomial
  Representation}\label{subsec:sum-binom-rep}
Consider the following reversible reaction between two complexes $C_1$
and $C_2$.
\begin{center}
  \ch[label-style=\scriptsize]{C_1 <=>[$k_{12}$][$k_{21}$] C_2}.
\end{center}
Let $m_i$, $i=1,2$, be the product of the concentrations of the
species in $C_i$ with the stoichiometric coefficients as the powers.
We call $m_i$ the monomial associated to $C_i$. Also let $x_1$ be the
concentration of a species that is in $C_1$ with the stoichiometric
coefficient $\alpha_1$ and is not in $C_2$. The differential equation
describing the kinetics of this species is
\begin{equation}\label{rec-bin1}
  \dot{x}_1=-\alpha_1k_{12}m_1+ \alpha_1k_{21}m_2. 
\end{equation}
For a species in $C_2$ with stoichiometric coefficient $\alpha_2$
which is not in $C_1$ with the concentration $x_2$, the differential
equation will be
\begin{equation}\label{rec-bin2}
  \dot{x}_2=\alpha_2 k_{12}m_1 - \alpha_2 k_{21}m_2.
\end{equation}
For a species with concentration $x_3$ that appears in both $C_1$ and
$C_2$, the differential equation will be
$\dot{x}_3=c(k_{12}m_1- k_{21}m_2)$, where $c \in \ZZ$ is the
difference between the corresponding stoichiometric coefficients in
$C_2$ and $C_1$. Set $b_{12}:=-k_{12}m_1+k_{21}m_2$ and
$b_{21}:=k_{12}m_1-k_{21}m_2$.  The steady state ideal of the above
chemical reaction network is
$\langle \alpha_1b_{12}, \alpha_2b_{21}\rangle$, which is equal to
$\langle b_{12}\rangle$, since $b_{12}=-b_{21}$.

For a reversible reaction network with more than one reaction, one can
associate a binomial of the form $b_{ij}:=k_{ij}m_i-k_{ji}m_j$ to each
reaction. Then the polynomials generating the steady state ideal can
be written as sums of $b_{ij}$ with integer coefficients. We make this
more precise in the following definition.
\begin{definition}\label{def:binom-rep}
  Let $\mathcal{C}$ be a reversible chemical reaction network with the
  complexes $C_1,\dots,C_s$, let $k_{ij}$, $1 \leq i \ne j \leq s$, be
  the rate constant of the reaction from $C_i$ to $C_j$, and let
  $x_1,\dots,x_n$ be the concentrations of the species in the chemical
  reaction network. We call a monomial $m_i$ the \textit{monomial
    associated to} $C_i$ if $m_i$ is the product of the concentrations
  of those species that appear in $C_i$ with the stoichiometric
  coefficients of the species as the powers. If there is a reaction
  between $C_i$ and $C_j$, then $b_{ij}:=-k_{ij}m_i+k_{ji}m_j$ is
  called the \textit{binomial associated to the reaction from $C_i$ to
    $C_j$}, otherwise $b_{ij}:=0$.
\end{definition}

\begin{example}\label{ex:binom-sum}
  Recall the following chemical reaction network form Section
  \ref{sec:intro}:
  \begin{center}
    \ch[label-style=\scriptsize]{$CO_2+H_2$ <=>[$k_{12}$][$k_{21}$]
      $CO+H_2O$},
  \end{center}
  \begin{center}
    \ch[label-style=\scriptsize]{$2CO$ <=>[$k_{34}$][$k_{43}$]
      $CO_2+C$}.
  \end{center}
  Following the notation in Section \ref{sec:intro}, let
  $x_1,x_2,x_3,x_4,x_5$ be the concentrations of $CO_2$, $H_2$, $CO$,
  $H_2O$ and $C$, respectively. The monomials associated to the
  complexes $CO_2+H_2$, $CO+H_2O$, $2CO$ and $CO_2+C$ are $x_1x_2$,
  $x_3x_4$, $x_3^2$ and $x_1x_5$, respectively. The binomials
  associated to the two reactions in this network are
  $b_{12}=-k_{12}x_1x_2+k_{21}x_3x_4$ and
  $b_{34}=-k_{34}x_3^2+k_{43}x_1x_5$. As there is no reaction between
  the first and third complexes we have $b_{13}=b_{31}=0$. Similarly,
  $b_{23}=b_{32}=0$, $b_{14}=b_{41}=0$ and $b_{24}=b_{42}=0$. Also, by
  definition, $b_{21}=-b_{12}$, $b_{34}=-b_{43}$, etc.. Using the
  binomials associated to the reactions, one can write the polynomials
  generating the steady state ideal as
\begin{align}
  & p_1 =  b_{12} -b_{34} \label{pol21} \\
  & p_2 =  b_{12}  \label{pol22} \\    
  & p_3 =  -b_{12}+2b_{34} \label{pol23} \\
  & p_2 =  -b_{12}  \label{pol24} \\    
  & p_2 =  -b_{34}  \label{pol25}.   
\end{align}
Hence, the steady state ideal can be written as
\begin{equation}\label{eq:ssi1-binom-basis}
  \langle p_1,p_2, p_3,p_4,p_5 \rangle  = \langle b_{12}, b_{34} \rangle. 
\end{equation}
\end{example}

As Example \ref{ex:binom-sum} and the definition of the binomials
$b_{ij}$ in Definition \ref{def:binom-rep} suggests one can write the
generators of the steady state ideal of every reversible chemical
reaction networks as sums of $b_{ij}$ with integer coefficients, i.e.,
assuming that $\mathcal{R}$ is the set of reactions in the chemical
reaction network
\begin{equation}\label{eq:ssi-binom-basis}
  \dot{x}_k =p_k=\sum\limits_{C_i \rightarrow C_i \in \mathcal{R}}
  c_{ij}^{(k)} b_{ij}, 
\end{equation}
for $k=1\ldots n$ and $c_{ij}^{(k)} \in \ZZ$. 

For clarification, we may remind the reader that in this article we
assume working over $\QQ[k_{ij},x_1,\dots,x_n]$. This is the case, in
particular, for Definition \ref{def:binom-rep} and the discussion
afterwards. In \cite{perez_millan_chemical_2012}, the authors
specialise $k_{ij}$ with positive real values, in which case, the
steady state ideal may or may not be binomial over
$\RR[x_1,\dots,x_n]$. Similarly, specialising $k_{ij}$ in Equation
\ref{eq:ssi-binom-basis} can result in writing $p_k$ as sum of
different binomials. In other words, if $k_{ij}$ specialised, the
representation of $p_k$ as sum of binomials in
\ref{eq:ssi-binom-basis} is not necessarily unique. This is
illustrated in the following example.

\begin{example}\cite[Example
  2.3]{perez_millan_chemical_2012}\label{ex:dickenstein}
  Let $C_1 = 2A$, $C_2=2B$ and $C_{3}=A+B$. Consider the reversible
  chemical reaction network given by the following reactions:
\begin{center}
  \ch[label-style=\scriptsize]{$2A$ <=>[$k_{12}$][$k_{21}$] $2B$}\\[0.5ex]
  \ch[label-style=\scriptsize]{$2A$<=>[$k_{13}$][$k_{31}$] $A+B$}\\[0.5ex]
  \ch[label-style=\scriptsize]{$A+B$<=>[$k_{32}$][$k_{23}$] $2B$}.
\end{center}
Assuming $x_1$ and $x_2$ to be the concentrations of $A$ and $B$,
respectively, by Definition~\ref{def:binom-rep},
\begin{align}
  b_{12} &= -k_{12}x_1^2 + k_{21}x_2^2 \\
  b_{13} &=  -k_{13}x_1^2 +k_{31}x_1x_2 \\
  b_{23} &=  k_{23}x_2^2 -k_{32}x_1x_2.
\end{align}
It can be checked that the generators of the steady state ideal can be
written as
\begin{align}
 p_1 &=  2b_{12}+b_{13}+b_{23} \\
 p_2 &=  -2b_{12}-b_{13}-b_{23}.
\end{align}
If $k_{31}=k_{32}$ then $k_{31}x_1x_2=k_{32}x_1x_2$, hence
$k_{31}x_1x_2$ will occur in $b_{13}$ and $b_{23}$ with opposite signs
which will be cancelled out in $b_{13}+b_{23}$, resulting in writing
$p_1$ as sum of $b_{12}$ and $-k_{13}x_1^2+k_{23}x_2^2$. This is
another way of writing $p_1$ as sum of binomials. Because binomiality
relies here on the condition $k_{31}=k_{32}$, this is an example of
conditional binomiality.
\end{example}

If we consider the rate constants $k_{ij}$ as indeterminates, i.e., if
we consider the steady state ideal as an ideal over the ring
$\QQ[k_{ij},x_1,\dots,x_n]$, then the representation in Equation
(\ref{eq:ssi-binom-basis}) as sum of binomials $b_{ij}$ will be
unique. More precisely, we have the following.

\begin{lemma}\label{gen-cond1}
  Given a reversible chemical reaction network with the notation of
  Definition \ref{def:binom-rep}, if $k_{ij}$ are indeterminates then
  the generators of the steady state ideal can be uniquely written as
  sum of the binomials presented in Equation \ref{eq:ssi-binom-basis}.
\end{lemma}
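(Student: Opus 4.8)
The plan is to exploit the fact that treating the rate constants as indeterminates equips each binomial $b_{ij}$ with a ``tag'', namely the indeterminate $k_{ij}$, that no other binomial shares; reading off the coefficient of a suitable power product then recovers the integer $c_{ij}^{(k)}$ uniquely.

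First I would invoke the uniqueness of the monomial expansion of polynomials: in $\QQ[k_{ij},x_1,\dots,x_n]$ every element has exactly one representation as a $\QQ$-linear combination of distinct power products of the indeterminates. This reduces the assertion to an independence statement about the binomials themselves. Indeed, given two representations of a generator $p_k$, say with integer coefficients $c_{ij}^{(k)}$ and $d_{ij}^{(k)}$, subtracting them produces a relation $\sum_{C_i \rightarrow C_j \in \mathcal{R}} e_{ij} b_{ij} = 0$ with $e_{ij} = c_{ij}^{(k)} - d_{ij}^{(k)} \in \ZZ$. Hence it suffices to show that the binomials associated to the reactions in $\mathcal{R}$ are $\ZZ$-linearly independent, i.e.\ that such a relation forces all $e_{ij}=0$.

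Next I would establish the separation of power products across distinct binomials. By Definition~\ref{def:binom-rep} we have $b_{ij} = -k_{ij}m_i + k_{ji}m_j$, whose two monomials $k_{ij}m_i$ and $k_{ji}m_j$ are already distinct since $k_{ij}$ and $k_{ji}$ are distinct indeterminates. The crucial observation is that the rate constant $k_{ij}$ occurs in exactly one binomial, the one attached to the reaction between $C_i$ and $C_j$: for any \emph{other} reaction the associated pair of rate constants $\{k_{i'j'}, k_{j'i'}\}$ is disjoint from $\{k_{ij},k_{ji}\}$, because rate constants are indexed by the unordered pair of complexes and distinct reactions involve distinct pairs. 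Consequently the power product $k_{ij}m_i$ appears in the expansion $\sum e_{i'j'} b_{i'j'}$ only through the single term $e_{ij}b_{ij}$, contributing the coefficient $-e_{ij}$.

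Finally I would conclude: if $\sum e_{ij}b_{ij}=0$, then by uniqueness of the monomial expansion the coefficient of each $k_{ij}m_i$ must vanish, whence $e_{ij}=0$ for every reaction $C_i \rightarrow C_j \in \mathcal{R}$. This proves the $\ZZ$-linear independence of the $b_{ij}$ and therefore the uniqueness of the representation in Equation~\ref{eq:ssi-binom-basis}. The only genuinely delicate point is the bookkeeping that guarantees the monomials $k_{ij}m_i$ remain distinct across reactions; this rests entirely on distinct reversible reactions carrying distinct rate-constant indeterminates, which is precisely the property that breaks down once the $k_{ij}$ are specialised to numbers, as Example~\ref{ex:dickenstein} illustrates.
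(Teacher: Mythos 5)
Your proposal is correct and follows essentially the same route as the paper: both arguments rest on the observation that, with the $k_{ij}$ as indeterminates, the monomials occurring in distinct binomials $b_{ij}$ are pairwise distinct in $\QQ[k_{ij},x_1,\dots,x_n]$, so uniqueness of the monomial expansion forces the integer coefficients to agree. Your coefficient-extraction step via the ``tag'' $k_{ij}$ is in fact slightly more explicit than the paper's proof, which phrases distinctness only for monomials attached to distinct complexes and leaves implicit the case of a single complex participating in two different reactions.
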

\begin{proof}
  Assuming that $k_{ij}$, $1 \leq i,j \leq s$ are indeterminates, they
  will be algebraically independent over
  $\QQ[x_1,\dots,x_n]$. Therefore for monomials $m_t$ and $m_{t'}$ in
  $\QQ[x_1,\dots,x_n]$ associated to two distinct complexes and for
  all $1 \leq i,j,i',j' \leq~s$, $k_{ij}m_t$ and $k_{i'j'}m_{t'}$ will
  be distinct monomials in $\QQ[k_{ij}, x_1,\dots,x_n]$.  Hence
  binomials $b_{ij}$ associated to the reversible reactions are not
  only pairwise distinct, but also their monomials are pairwise
  distinct in $\QQ[k_{ij},x_1,\dots,x_n]$. This implies that the
  generators of the steady state ideal have unique representations in
  $\QQ[k_{ij},x_1,\dots,x_n]$ as sum of $b_{ij}$ with integer
  coefficients.
\end{proof}

Having a unique representation as in Equation \ref{eq:ssi-binom-basis}
enables us to represent our binomial coefficient matrix, defined
later, which is the base of our efficient algorithm for testing
unconditional binomiality of reversible chemical reaction networks.

Considering rate constants $k_{ij}$ as indeterminates, if a steady
state ideal is unconditionally binomial, i.e., binomial in the ring
$\QQ[k_{ij},x_1,\dots,x_n]$, then its elimination ideal is binomial in
the ring $\QQ[x_1,\dots,x_n]$. Indeed, the elimination of a binomial
ideal is a binomial ideal. This can be seen from elimination property
of Gr\"obner bases. Authors of \cite{eisenbud1996binomial} have
studied binomial ideals and their properties intensively. In
particular Corollary 1.3 in the latter article state the binomiality
of the elimination ideal of a binomial ideal. We remind the reader
that the definition of binomiality in this article is different from
\cite{eisenbud1996binomial}. In the latter, binomial ideals have
binomial and monomial generators, however in the current article, we
only consider binomial generators. Restricting the definition of
binomial ideal to the ideals with only binomial generators, most of
the result in \cite{eisenbud1996binomial} still holds, in particular
the one about the elimination of binomial ideals. Therefore, if the
steady state ideal of a chemical reaction network is binomial in
$\QQ[k_{ij},x_1,\dots,x_n]$, then its elimination
$I \cap \QQ[x_1,\dots,x_n]$ is also a binomial ideal.

Geometrically, the above discussion can be explained via projection of
the corresponding varieties. Given a chemical reaction network, assume
that reaction rates $k_{ij}$ are indeterminates and let the number of
$k_{ij}$ be $t$. Let $V$ denote the steady state variety, i.e., the
variety of the steady state ideal.  $V$ is a Zariski closed subset of
${\KK}^{t+n}$, where $\KK$ is an appropriate field (e.g., $\CC$). If
$V$ is a coset of a subgroup of the multiplicative group
$(\KK^*)^{t+n}$, then the projection of $V$ into the space generated
by $x_1\dots,x_n$, i.e., $V \cap (\KK^*)^n$ is also a coset. In
particular, the projection of a group is a group. Since the variety of
a binomial ideal is a coset
\cite{grigoriev2019efficiently,grigoriev_milman2012}, the projection
of the variety of a binomial ideal is the variety of a binomial
ideal. As special cases, the projection of a toric variety, a shifted
toric variety and a binomial variety (defined in
\cite{grigoriev2019efficiently,grigoriev_milman2012}) is a toric, a
shifted toric and a binomial variety, respectively. For a detailed
study of toricity of steady state varieties, we refer to
\cite{grigoriev2019efficiently}.

\begin{remark}
  \begin{itemize}
  \item We may mention that in \cite{craciun_toric_2009}, the authors
    have studied \textit{toric dynamical systems}, where they have
    considered working over $\QQ[k_{ij},x_1,\dots,x_n]$ and presented
    several interesting results. In particular, Theorem 7 in that
    article states that a chemical reaction network is toric if and
    only if the rate constants lie in the variety of a certain ideal
    in $\QQ[k_{ij}]$, called the \textit{moduli ideal}.
  \item Toric dynamical systems are known as \textit{complex
      balancing} mass action systems \cite{craciun_toric_2009}.
  \end{itemize}
\end{remark}

\subsection{The Algorithm}\label{subsec:algorithm}
\begin{definition}\label{def:binom-matrix}
  Let $\mathcal{C}$ be a reversible chemical reaction network as in
  Definition \ref{def:binom-rep} and assume that the generators of its
  steady state ideal are written as the linear combination of the
  binomials associated to its reactions as in Equation
  \ref{eq:ssi-binom-basis}, i.e.,
  \begin{displaymath}
    p_k=\sum_{C_i \rightarrow C_i \in \mathcal{R}}^{s} c_{ij}^{(k)} b_{ij}\quad
    \text{for}\quad  k=1,\dots,n.
  \end{displaymath}
  We define the \textit{binomial coefficient matrix} of $\mathcal{C}$
  to be the matrix whose rows are labeled by $p_1,\dots,p_n$ and whose
  columns are labeled by non-zero $b_{ij}$ and the entry in row $p_k$
  and column $b_{ij}$ is $c_{ij}^{(k)} \in \ZZ$.
\end{definition}

By the definition, the binomial coefficient matrix of a reversible
chemical reaction network is the coefficient matrix of the binomials
that occur in the representation of the generators of the steady state
ideal as sum of binomials.  As we consider $k_{ij}$ indeterminates,
the representation of the generators of the steady state ideal of a
given complex is unique, which implies that the binomial coefficient
matrix of a given complex is unique too.

\begin{example}\label{ex:binom-matrix}
  Consider the chemical reaction network in Example
  \ref{ex:binom-sum}, with generators of the steady state ideal as
  follows.
  \begin{align}
  & p_1 =  b_{12} -b_{34} \\
  & p_2 =  b_{12}  \\    
  & p_3 =  -b_{12}+2b_{34} \\
  & p_2 =  -b_{12}  \\    
  & p_2 =  -b_{34}.   
  \end{align}
  The binomial coefficient matrix of this chemical reaction network is
  \begin{equation}\label{eq:binom-matrix}
    M \ = \ \ \, \bordermatrix{ & b_{12} & b_{34} \cr
      p_1 & 1 & -1 \cr
      p_2 & 1 & 0 \cr
      p_3 & -1 & 2 \cr
      p_4 & -1 & 0 \cr
      p_5 & 0 & -1 \cr}.
  \end{equation}
\end{example}

Another simple example is the reaction
\begin{center}
  \ch[label-style=\scriptsize]{$4A$<=>[$k_{12}$][$k_{21}$] $A+B$},
\end{center}
with the binomial associated to it as
$b_{12}:=-k_{12}x_1^4+k_{21}x_1x_2$, where $x_1$ is the concentration
of $A$ and $x_2$ is the concentration of $B$. The steady state ideal
is generated by $\{ 3b,-b \}$, and the binomial coefficient matrix for
this network is $\genfrac{(}{)}{0pt}{}{3}{-1}$.

One can test binomiality of the steady state ideal of a reversible
reaction network using its binomial coefficient matrix.

\begin{theorem}\label{thm:hermite-form} 
  The steady state ideal of a reversible chemical reaction network is
  unconditionally binomial, i.e., binomial in
  $\QQ[k_{ij},x_1,\dots,x_n]$, if and only if the reduced row echelon
  form of its binomial coefficient matrix has at most one non-zero
  entry at each row.
\end{theorem}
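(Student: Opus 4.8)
The plan is to prove both directions of the equivalence by exploiting the uniqueness of the sum-of-binomials representation established in Lemma~\ref{gen-cond1}. The key observation is that since the rate constants $k_{ij}$ are indeterminates, the binomials $b_{ij}$ behave, for the purposes of taking $\QQ$-linear combinations, like algebraically independent ``atoms'': by Lemma~\ref{gen-cond1} every generator has a \emph{unique} expression as an integer combination of the $b_{ij}$, and more strongly, the monomials $k_{ij}m_i$ occurring across distinct binomials are pairwise distinct. This means that a $\QQ$-linear combination $\sum_{k} \lambda_k p_k = \sum_{ij} \bigl(\sum_k \lambda_k c_{ij}^{(k)}\bigr) b_{ij}$ is itself a binomial if and only if at most one of the coefficients $\sum_k \lambda_k c_{ij}^{(k)}$ is non-zero, since each $b_{ij}$ contributes two monomials that cannot cancel against monomials of any other $b_{i'j'}$. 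In other words, a row-space element of the binomial coefficient matrix $M$ encodes a binomial precisely when that row vector has at most one non-zero entry.

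\medskip\noindent\textbf{Sufficiency.} Suppose the reduced row echelon form $R$ of $M$ has at most one non-zero entry in each row. Each non-zero row of $R$ is a $\QQ$-linear combination of the $p_k$ (row operations stay within the row space), and by the observation above such a row corresponds to a scalar multiple of a single $b_{ij}$, hence to a binomial. The non-zero rows of $R$ generate the same $\QQ$-vector space as the $p_k$, and one must check that the ideal generated by these binomials equals the steady state ideal; this follows because each $p_k$ is a $\QQ$-linear (indeed integer, after clearing denominators) combination of the reduced rows and conversely, so the two generating sets yield the same ideal. Thus the steady state ideal is generated by binomials and is unconditionally binomial.

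\medskip\noindent\textbf{Necessity.} Conversely, suppose the ideal is binomial in $\QQ[k_{ij},x_1,\dots,x_n]$. The subtle point here is to pass from the \emph{existence} of some binomial generating set to a statement about the row space of $M$. The plan is to argue that any binomial lying in the ideal and expressible as a $\QQ$-linear combination of the generators $p_k$ must, by the uniqueness in Lemma~\ref{gen-cond1}, have a coefficient vector with at most one non-zero entry; and that a binomial generating set can be taken within the $\QQ$-span of the $p_k$. One then shows these binomial generators span the same row space as $M$, so the row space of $M$ admits a basis of vectors each having at most one non-zero entry. Because the reduced row echelon form is a canonical basis of the row space, this structural constraint on the row space forces each row of the RREF to have at most one non-zero entry.

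\medskip\noindent I expect the main obstacle to lie in the necessity direction, specifically in justifying that a binomial generating set of the ideal can be chosen inside the $\QQ$-vector space spanned by $p_1,\dots,p_n$ rather than merely inside the full ideal (which also contains polynomial-coefficient multiples of the $p_k$). The delicate issue is that ``binomial in $\QQ[k_{ij},x]$'' a priori permits binomial generators built using non-constant polynomial coefficients, whereas the matrix $M$ only records $\QQ$-linear (constant-coefficient) relations. Resolving this requires a degree or grading argument: because each $b_{ij}$ is homogeneous of a fixed multidegree in the combined variables and the monomials across distinct $b_{ij}$ are distinct, any binomial in the ideal must in fact be a constant-coefficient combination of the $b_{ij}$ appearing in the generators, which pins the analysis back down to the row space of $M$ and closes the argument.
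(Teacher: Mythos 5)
Your key observation (a $\QQ$-linear combination of the $p_k$ is a binomial precisely when its coefficient vector over the $b_{ij}$ has at most one non-zero entry) and your sufficiency argument are correct, and they coincide with the paper's treatment of that direction. For necessity, however, you take a genuinely different route: the paper row-reduces the generators, argues (using that distinct reactions involve disjoint rate constants) that the resulting pairwise irreducible generators have pairwise coprime leading terms, concludes via Buchberger's first criterion that they form a Gr\"obner basis, and then invokes the Eisenbud--Sturmfels characterisation of binomial ideals through their Gr\"obner bases \cite{eisenbud1996binomial}; you instead try to stay entirely inside linear algebra with a grading argument. You also correctly isolate the crux of that direction: a binomial generating set of $I$ need not lie in $\operatorname{span}_\QQ\{p_1,\dots,p_n\}$, and one must force it down into that span.

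The problem is that both claims you make to resolve this crux are false as stated. First, the $b_{ij}$ are not homogeneous ``in the combined variables'': for the paper's own reaction between $4A$ and $A+B$ one has $b_{12}=-k_{12}x_1^4+k_{21}x_1x_2$, whose terms have total degrees $5$ and $3$ (and under the full multigrading no two-term polynomial is homogeneous at all). Second, it is not true that every binomial in $I$ is a constant-coefficient combination of the $b_{ij}$: in the paper's running example, $x_5p_2=x_5b_{12}$ is a binomial in $I$ lying outside $\operatorname{span}_\QQ\{b_{ij}\}$. Both defects are repairable, and the repair shows your plan can be completed: choose weights $\deg x_l=1$ and $\deg k_{ij}=D-\deg m_i$ with $D>\max_i\deg m_i$; then every $b_{ij}$, hence every $p_k$, is homogeneous of degree $D$, so $I$ is graded with $I_e=0$ for $e<D$ and $I_D=\operatorname{span}_\QQ\{p_1,\dots,p_n\}$. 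Given any binomial generating set of $I$, its homogeneous components lie in $I$ (homogeneity of $I$), generate $I$, and are binomials or monomials; only the degree-$D$ components can contribute to $I_D$, so they span $I_D$. Each such component lies in $\operatorname{span}_\QQ\{b_{ij}\}$, where every non-zero element has an even number of terms, so it cannot be a monomial, and being a binomial it must be a scalar multiple of a single $b_{ij}$. Hence the row space of $M$ is spanned by standard basis vectors, and the reduced row echelon form has at most one non-zero entry per row. With this repair your route closes and is arguably cleaner and more self-contained than the paper's Gr\"obner-basis argument; but as written, the two claims above are gaps, not proofs.
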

\begin{proof}
  Let $G=\{p_1,\dots,p_n\} \subseteq \QQ[k_{ij},x_1,\dots,x_n]$ be a
  generating set for the steady state ideal of a given reversible
  chemical reaction network $\mathcal{C}$, and let
  $\{b_{ij} \mid 1 \leq i \ne j \leq s\}$ be the ordered set of
  non-zero binomials associated to the reactions. Fix a term order on
  the monomials in $\QQ[k_{ij},x_1,\dots,x_n]$.

  First we prove that if the reduced row echelon form of the binomial
  coefficient matrix has at most one non-zero entry at each row, then
  the steady state ideal is binomial. The proof of this side of the
  proposition comes from the definition of reduced row echelon
  form. In fact, the reduced row echelon form of the binomial
  coefficient matrix of $\mathcal{C}$ can be computed by row reduction
  in that matrix, which is equivalent to the reduction of the
  generators of the steady state ideal with respect to each
  other. Therefore, computing the reduced row echelon form of the
  binomial coefficient matrix and multiplying it with the vector of
  binomials $b_{ij}$, one can obtain another basis for the steady
  state ideal. Having this, if the reduced row echelon form has at
  most one non-zero entry at each row, then the new basis for the
  steady state ideal will only include $b_{ij}$. Therefore the steady
  state ideal will be binomial.

  Now we prove the ``only if'' part of the proposition, that is, if
  the steady state ideal of $\mathcal{C}$ is binomial, then the
  reduced row echelon form of the binomial coefficient matrix has at
  most one non-zero entry at each row. We claim that for each pair of
  polynomials $p_t, p_m \in G$, $p_t$ is reducible with respect to
  $p_m$ if and only if there exists a binomial $b_{ij}$ that occurs in
  both $p_t$ and $p_m$ and includes their leading terms. The ``only
  if'' part of the claim is obvious. To prove the ``if'' part of the
  claim, let $p_m$ be reducible with respect to $p_t$. Then the
  leading term of $p_m$ divides the leading term of $p_t$. Since the
  leading terms are multiples of $k_{ij}$ and these are disjoint
  indeterminates, this is only possible if both of the leading terms
  are equal. If the leading terms are equal, then $b_{ij}$ in which
  the leading terms occur, must itself occur in both $p_t$ and $p_m$.
  Therefore $p_t$ and $p_m$ share a binomial associated to a reaction,
  which is in contradiction with our assumption.

  From the above claim and the definition of the reduced row echelon
  form one can see that $p_1,\dots,p_n$ are pairwise irreducible if
  and only if the binomial coefficient matrix of $\mathcal{C}$ is in
  reduced row echelon form.

  Now we prove that $p_1,\dots,p_n$ are pairwise irreducible if and
  only if they form a Gr\"obner basis in which polynomials are
  pairwise irreducible. Note that this does not necessarily imply that
  $G$ is a a reduced Gr\"obner basis, as $p_i$ are not necessarily
  monic. Assume that $p_1,\dots,p_n$ are pairwise irreducible. We
  prove that the greatest common divisor of each pair of the leading
  terms of the $p_1,\dots,p_n$ is $1$. By contradiction, assume that
  there exists a monomial not equal to $1$ which divides the leading
  terms of both $p_t$, $p_m$, for $1 \leq t,m \leq n$. Then there
  exists a variable $x_l$ such that $x_l$ divides the leading terms of
  $p_t$ and $p_m$. Since each leading term is the monomial associated
  to a complex, the species with concentration $x_1$ occurs in two
  complexes with associated monomials as the leading terms of $p_t$
  and $p_m$. Then both $p_t$ and $p_m$ have as their summand the
  binomials that are associated to the reactions including those
  complexes. As for each complex there exists at least one binomial
  associated, both $p_m$ and $p_t$ have as a summand one common
  binomial $b_{ij}$. However, we had already proved that this implies
  that $p_t$ and $p_m$ are not pairwise irreducible, which is a
  contradiction to the assumption that the greatest common divisor of
  the leading terms of $p_t$ and $p_m$ is not $1$. Now by Buchberger's
  first criterion if the greatest common divisor of the leading terms
  of each pair of polynomials in $G$ is $1$ then $G$ is a Gr\"obner
  basis.  The other side of this claim is obvious.

  From what we have proved until now, we can conclude that the
  binomial coefficient matrix of $\mathcal{C}$ is in reduced row
  echelon form if and only if $G$ is a Gr\"obner basis with pairwise
  irreducible elements. On the other hand, by a result of Eisenbud and
  Sturmfels \cite{eisenbud1996binomial}, the steady state ideal of
  $\mathcal{C}$ is binomial if and only if every Gr\"obner basis of it
  includes binomials. Therefore we conclude that the steady state
  ideal is binomial if and only if the reduced row echelon form of the
  binomial coefficient matrix has at most one non-zero entry in each
  row.
\end{proof}

\begin{example}
  Following Example \ref{ex:binom-matrix}, one case easily see that
  the reduced row echelon form of the binomial coefficient matrix
  (\ref{eq:binom-matrix}) is
  \begin{equation}
    M \ = \ \ \, \bordermatrix{ & b_{12} & b_{34} \cr
      p_1 & 1 & 0 \cr
      p_2 & 0 & 1 \cr
      p_3 & 0 & 0 \cr
      p_4 & 0 & 0 \cr
      p_5 & 0 & 0 \cr},
  \end{equation}
  which means that the steady state ideal is unconditionally binomial
  and is generated by $\{b_{12},b_{34}\}$.
\end{example}

Theorem \ref{thm:hermite-form} yields Algorithm \ref{alg:binom-test}
for testing unconditional binomiality. The input of the algorithm is a
reversible chemical reaction network, given by the vector of monomials
associated to its complexes, $(m_1,\dots,m_s)$, and the rates
$k_{ij}$. It uses a function \textit{IsBinomial} which takes a set of
polynomials and checks if all of them are binomial.

\begin{algorithm}[t]
  \caption{Testing Unconditional Binomiality of Reversible Chemical
    Reaction Networks \label{alg:binom-test}}
  \DontPrintSemicolon
  \SetAlgoVlined
  \LinesNumbered
  \SetKwProg{Fn}{Function}{}{end}
  \SetKwFunction{BINOMTEST}{BinomialityTest}
  \SetKwFunction{RREF}{ReducedRowEchelonForm}
  \SetKwFunction{MAT}{Matrix}
  \SetKwFunction{SET}{MakeSet}
  \SetKwFunction{IB}{IsBinomial}
  
  \Fn{\BINOMTEST{$\mathcal{C}$}}{
    \KwIn{$\mathcal{C}=\{(m_1,\dots,m_s)\in [X]^n, k_{ij}\}$}
    \KwOut{Binomial or NotBinomial}
    $b_{ij}:=-k_{ij}m_i+k_{ji}m_j, 1\leq i \ne j \leq s$\;
    $B:=(b_{ij}, 1\leq i \ne j \leq m)$\;
    $p_k:=\sum\limits_{}c_{ij}^kb_{ij}, 1 \leq k \leq n$\;
    $M:=\MAT{$c_{ij}^k$}$\;
    $\tilde{M}=\RREF{M}$\;
    $G:=\tilde{M} B$\;
    \uIf{\IB{$G$}}{
      $R:=Binomial$\;
    }
    \Else{$R:=NotBinomial$\;
    }
    \Return{$R$}\;
  }
\end{algorithm}

\paragraph*{Generalisation to Non-Reversible Networks}
The unconditional binomiality test via the binomial coefficient matrix
for a reversible chemical reaction network can be used as a subroutine
for testing unconditional binomiality of an arbitrary chemical
reaction network. In order to do so, partition a given chemical
reaction network $\mathcal{C}$ into a reversible reaction network
$\mathcal{C}_1$ and a non--reversible reaction network
$\mathcal{C}_2$.  Apply Algorithm \ref{alg:binom-test} to
$\mathcal{C}_1$, construct its binomial coefficient matrix, say $M_1$.
Construct the stoichiometric coefficient matrix of $\mathcal{C}_2$,
say $M_2$, and consider the block matrix $M:=(\tilde{M}_1|M_2)$.
Compute the row reduced echelon form of $M$, say $\tilde{M}$. If all
the rows of $\tilde{M}$ have at most one non-zero entry, then the
steady state ideal is binomial.

Otherwise, one can consider computing $\tilde{M}$ as a preprocessing
step and run another method, e.g., Gr\"obner bases, quantifier
elimination as in \cite{grigoriev2019efficiently}, or the method in
Dickenstein, et al \cite{millan2012chemical}.
  
\section{Complexity \& Comparisons}\label{sec:complexity}

\begin{proposition}
  Let $r$ be the number of reactions and $n$ be the number of species
  of a reversible chemical reaction network $\mathcal{C}$. The
  asymptotic worst case time complexity of testing unconditional
  binomiality of the steady state ideal of $\mathcal{C}$ via Algorithm
  \ref{alg:binom-test} can be bounded by
  $\mathcal{O}(\max(r,n)^\omega)$ where $\omega\approx 2.3737$, which
  is also the complexity of matrix multiplication.
\end{proposition}
\begin{proof}
  The operations in steps 1--4 and 7--11 are at most linear in terms
  of $r$ and $n$. Since $M$ is a matrix of size $n \times r$, where
  $r=|b_{ij}|$, and $B$ is a vector of size $r$, computing reduced row
  echelon form in step $5$ and also the matrix multiplication in step
  $6$ will cost at most $\mathcal{O}(\max(r,n)^\omega)$. Therefore the
  total number of operations in the algorithm can be bounded by
  $\mathcal{O}(\max(r,n)^\omega)$.
\end{proof}

In \cite[Section 4]{grigoriev2019efficiently} it has been shown that
there exists an exponential asymptotic worst case upper bound on the
time complexity of testing toricity. An immediate consequence of that
result is that the time complexity of testing binomiality can be
bounded by the same exponential function. Following the arguments in
\cite[Section 4]{grigoriev2019efficiently}, one can show that there
exists an algorithm for testing binomiality over
$\QQ[k_{ij},x_1,\dots,x_n]$ and $\QQ[x_1,\dots,x_n]$ simultaneously,
with an exponential upper bound for the worst case time complexity.
  
As mentioned earlier in Section \ref{sec:linearalgebra-algo}, the
reduced Gr\"obner basis of a binomial ideal, with respect to every
term order, includes only binomials. This directly can be seen from
running Buchberger's algorithm and that S--polynomials and their
reductions by binomials are binomial.  Eisenbud et. al' article
\cite{eisenbud1996binomial}, with a slightly different definition of
binomial ideals, investigates many properties of binomial ideals using
the latter fact.  Following this fact, a typical method for testing
binomiality is via computing a reduced Gr\"obner basis of a steady
state ideal $I \subseteq \QQ[k_{ij},x_1,\dots,x_n]$ The drawback of
computing Gr\"obner bases is that this is EXPSPACE-complete
\cite{mayr_complexity_1982}. So our algorithm is asymptotically
considerably more efficient than Gr\"obner basis computation.

\begin{example}[Models from the BioModels
  Repository\footnote{\url{https://www.ebi.ac.uk/biomodels/}}]
  \begin{itemize}
  \item There are twenty non--reversible biomodels in which Gr\"obner
    basis computations done in \cite{grigoriev2019efficiently} for
    testing conditional binomiality do not terminate in a six--hour
    time limit, however our algorithm terminates in less than three
    seconds.  Also there are six cases in which Gr\"obner basis
    computations terminate in less than six hours, but are at least
    1000 times slower than our algorithm. Finally there are ten models
    in which Gr\"obner basis is at least 500 times slower than our
    computations.
  \item There are sixty nine biomodels that are not considered for
    computation in \cite{grigoriev2019efficiently} because of the of
    the unclear numeric value of their rate constants. Our
    computations on almost all of those cases terminated in less than
    a second.
  \item (Reversible models from the BioModels Repository) Biomodels
    491 and 492 are both reversible. Biomodel 491 has 52 species and
    86 reactions.  The binomial coefficient matrix of this biomodel
    has size $52\times 86$ and has $\pm 1$ entries. A reduced row
    echelon form computation in Maple reveals in $0.344$ seconds that
    it is unconditionally binomial, while a Gr\"obner basis
    computation takes more than $12$ seconds to check its conditional
    binomiality. BioModel 492 has also 52 species, and includes 88
    reactions. The binomial coefficient matrix has entries $\pm1$ and
    is of size $52 \times 88$. This biomodel is also unconditionally
    binomial. It takes $0.25$ seconds for Maple to check its
    unconditional binomiality via Algorithm \ref{alg:binom-test} in
    Maple, while a Gr\"obner basis computation takes near $18$
    seconds, as one can see in the computations in \cite[Table
    3]{grigoriev2019efficiently}, which show the group structure of
    the steady state varieties of the models.
  \end{itemize}
\end{example}

Dickenstein et al.~in \cite{millan2012chemical} have proposed a method
for testing toricity of a chemical reaction network. The definitions
and purpose of that work are slightly different from our article,
hence comparisons between those two methods should be treated with
caution. While we focus on unconditional binomiality of the steady
state ideals of reversible reaction networks, , i.e., binomiality in
$\QQ[k_{ij},x_1,\dots,x_n]$, with the aim of efficiency of the
computations, the authors of the above article are interested in
conditional binomiality with algebraic dependencies between $k_{ij}$
such that the elimination ideal is binomial. Having mentioned that,
our method leads to the computation of reduced row echelon form of a
matrix of size $n\times r$ with integer entries which is polynomial
time, while Theorem 3.3. in \cite{millan2012chemical} requires
constructing a matrix of size $n \times s$ with entries from
$\ZZ[k_{ij}]$ and finding a particular partition of its kernel.

Considering Example 2.3 in \cite{perez_millan_chemical_2012}, our
algorithm constructs the matrix $M$ and its reduced row echelon form
$\tilde{M}$:
\begin{equation} 
  M=\begin{pmatrix}
    1 & 1 & -2 \cr
    -1 & -1 & 2 \cr
  \end{pmatrix},\quad
  \tilde{M}=\begin{pmatrix} 1 & 1 & 2 \cr 0 & 0 & 0 \cr
  \end{pmatrix},
\end{equation}
and we see that the steady state ideal is not unconditionally binomial
over $\QQ[k_{ij},x_1,\dots,x_n]$. The method in
\cite{perez_millan_chemical_2012} constructs
\begin{equation}
  \begin{pmatrix}
    -2k_{12}-k_{13} & 2k_{21}+k_{23} & k_{31}-k_{32} \cr
    2k_{12}+k_{13} & -2k_{21}-k_{23} & -k_{31}+k_{32} \cr
  \end{pmatrix},
\end{equation}
and finds an appropriate partition, which shows that the steady state
ideal is binomial in $\QQ[x_1,\dots,x_n]$ if and only if
$k_{31}=k_{32}$. As a larger example, consider the chemical reaction
network given in Example 3.13 in \cite{perez_millan_chemical_2012} and
assume that it is a reversible chemical reaction network. Our method
constructs a matrix with entries $\pm1$ of size $9\times 8$ and
computes its reduced row echelon form (in this case reduced row
echelon form, as entries are $\pm1$). The method described in
\cite{perez_millan_chemical_2012} leads to a $9\times 10$ matrix with
entries as linear polynomials in $\ZZ[k_{ij}]$ and computes a
particular partition of the kernel of the matrix.
  
For homogeneous ideals, Conradi and Kahle have shown in
\cite{conradi2015detecting} that the sufficient condition for
conditional binomiality in \cite{perez_millan_chemical_2012} is
necessary, too. Their Algorithm 3.3 tests conditional binomiality of a
homogeneous ideal, which can be generalised by homogenising. The
algorithm computes a basis for the ideal degree by degree and performs
reductions with respect to the computed basis elements at each degree
step.  Since our algorithm is intended for steady state ideals of
reversible chemical reaction networks, which are not necessarily
homogeneous, our following comparison with the Conradi--Kahle
algorithm bears a risk of being biased by homogenisation. We discuss
the execution of both algorithms on Example 3.15 in
\cite{perez_millan_chemical_2012}. This chemical reaction network does
not satisfy the sufficient condition presented in \cite[Theorem
3.3]{perez_millan_chemical_2012}. Testing this condition leads to the
construction of a $9 \times 13$ matrix with entries in $\ZZ[k_{ij}]$,
followed by further computations, including finding a particular
partition of its kernel.  Theorem 3.19 in
\cite{perez_millan_chemical_2012} is a generalisation of Theorem 3.3
there, which can test conditional binomiality of this example by
adding further rows and columns to the matrix. Conradi and Kahle also
treat this example with their algorithm. This requires the
construction of a coefficient matrix of size $9 \times 13$ with
entries in $\ZZ[k_{ij}]$ and certain row reductions. If we add
reactions so that the reaction network becomes reversible, our
algorithm will construct a matrix of size $9 \times 9$ with entries
$\pm 1$ and compute its reduced row echelon form to test unconditional
binomiality in $\QQ[k_{ij},x_1,\dots,x_9]$.

\section{Conclusions}\label{sec:conclusion}
Binomiality of steady state ideals is an interesting problem in
chemical reaction network theory. It has a rich history and literature
and is still an active research area. For instance, recently MESSI
systems have been introduced \cite{millan_structure_2018} following
the authors' work on binomiality of a system. Finding binomiality and
toricity is computationally hard from both a theoretical and a
practical point of view. It typically involves computations of
Gr\"obner bases, which is EXPSPACE-complete.

In a recent work \cite{grigoriev2019efficiently} we investigated
toricity of steady state varieties and gave efficient algorithms. In
particular, we experimentally investigated toricity of biological
models systematically via quantifier elimination. Besides that, we
presented exponential theoretical bounds on the toricity problem. The
current article, restricting to reversible reaction networks, aims at
an efficient linear algebra approach to the problem of unconditional
binomiality, which can be considered as a special case of the toricity
problem.

In that course, considering rate constants as indeterminates, we
assign a unique binomial to each reaction and construct the
coefficient matrix with respect to these binomials. Our algorithm
proposed here computes a reduced row echelon form of this matrix in
order to detect unconditional binomiality. The algorithm is quite
efficient, as it constructs comparatively small matrices whose entries
are integers. It is a polynomial time algorithm in terms of the number
of species and reactions. While other existing methods for testing
conditional binomiality have different settings and purposes than our
algorithm, for the common cases, our algorithm has advantages in terms
of efficiency.

\subsection*{Acknowledgments}
This work has been supported by the bilateral project ANR-17-CE40-0036
and DFG-391322026 SYMBIONT \cite{BoulierFages:18a,BoulierFages:18b}.

\end{document}